\newtheorem{theorem}{Theorem}
\newtheorem{lemma}{Lemma}
\newtheorem{definition}{Definition}
\newtheorem{remark}{Remark}
\newtheorem{conjecture}{Conjecture}
\newcommand{\g}{\gamma}
\newcommand{\He}{\mathbb{H}}
\newcommand{\e}{{\rm e}}
\newcommand{\I}{{\rm i}}
\newcommand{\C}{\mathbb{C}}
\newcommand{\ep}{\varepsilon}
\newcommand{\W}{{\rm W}}
\newenvironment{proof}[1][\relax]%
  {\paragraph{Proof\ifx#1\relax\else~of #1\fi}}%
  {~\hfill$\square$\par\bigskip}
\title{The connective constant of the honeycomb lattice equals $\sqrt{2+\sqrt 2}$}
\author{Hugo Duminil-Copin and Stanislav Smirnov}
\date{}
\begin{document}

\maketitle

\begin{abstract}We provide the first mathematical proof that the connective constant of the hexagonal lattice is equal to $\sqrt{2+\sqrt{2}}$. This value has been derived non rigorously by B. Nienhuis in 1982, using Coulomb gas approach from theoretical physics. Our proof uses a parafermionic observable for the self avoiding walk,
which satisfies a half of the discrete Cauchy-Riemann relations.
Establishing the other half of the relations (which conjecturally holds in the scaling limit)
would also imply convergence of the self-avoiding walk to SLE($8/3$).
 \end{abstract}

\section{Introduction}
A famous chemist P. Flory \cite{Flory} proposed to consider self-avoiding (\emph{i.e.} visiting every vertex at most once) walks on a lattice as a model for polymer chains. Self-avoiding walks turned out to be a very interesting object, leading to rich mathematical theories and challenging questions, see \cite{MadrasSlade}.

Denote by $c_n$ the number of $n$-step self-avoiding walks on the hexagonal lattice $\He$ started from some fixed vertex, \emph{e.g.} the origin. Elementary bounds on $c_n$ (for instance $\sqrt{2}^n\leq c_n\leq 3\cdot 2^{n-1}$) guarantee that $c_n$ grows exponentially fast. Since a $(n+m)$-step self-avoiding walk can be uniquely cut into a $n$-step self-avoiding walk and a parallel translation of a $m$-step self-avoiding  walk, we infer that
\begin{equation*}
	c_{n+m}\leq c_nc_m,
\end{equation*}
from which it follows that there exists $\mu\in(0,+\infty)$ such that
\begin{equation*}	
	\mu:=\lim_{n\rightarrow \infty}c_n^{\ \frac 1n}.
\end{equation*}
The positive real number $\mu$ is called the \emph{connective constant} of the hexagonal lattice. 

Using Coulomb gas formalism, B. Nienhuis \cite{Nienhuis,Nienhuis-jsp} proposed physical arguments for $\mu$ to have the value $\sqrt{2+\sqrt 2}$. We rigorously prove this statement. While our methods are different from those applied by Nienhuis, they are similarly motivated by considerations of vertex operators in the $O(n)$ model.
Our methods do not directly apply to the square lattice,
for which the value of the connective constant is different and currently unknown.

\begin{theorem}\label{theorem}
For the hexagonal lattice, $$\mu=\sqrt{2+\sqrt{2}}.$$
\end{theorem}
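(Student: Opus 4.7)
The plan is to introduce a parafermionic observable for self-avoiding walks on $\He$ whose discrete contour integral around each hexagonal face vanishes at the conjectured critical weight $x_c := 1/\sqrt{2+\sqrt{2}}$, and then to exploit this conservation law on trapezoidal domains. Working on the mid-edges of $\He$ and fixing a starting mid-edge $a$ on the boundary of a finite simply connected subdomain $\Omega\subset\He$, for every mid-edge $z$ in $\Omega$ I set
\[
  F(z) \;:=\; \sum_{\g:\,a\to z\text{ in }\Omega} x^{|\g|}\,\e^{-\I\sigma W(\g)},
\]
where the sum runs over self-avoiding walks from $a$ to $z$ staying in $\Omega$, $|\g|$ counts steps, and $W(\g)$ is the total turning angle (a multiple of $\pi/3$). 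The parameters $x>0$ and $\sigma\in\mathbb{R}$ are left free for now.

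The heart of the argument is a local vertex identity of discrete Cauchy--Riemann type: at every interior vertex $v\in\Omega$ with incident mid-edges $p,q,r$, one should prove
\[
  (p-v)F(p) + (q-v)F(q) + (r-v)F(r) \;=\; 0.
\]
I would establish this by pairing walk-contributions in threes: every walk ending at one of $p,q,r$ is grouped with the two walks obtained by extending or truncating it through $v$, and after collecting the winding phases the three contributions cancel. This cancellation is equivalent to a $3\times 3$ linear system on the variables $x$ and $\e^{\I\sigma\pi/3}$ whose unique non-trivial solution is $x=x_c=1/\sqrt{2+\sqrt{2}}$ together with the spin $\sigma=5/8$. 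The three-fold symmetry at a hexagonal vertex is what makes this cancellation possible; the analogous computation on the square lattice admits no consistent choice of $(x,\sigma)$, which explains why the method is restricted to $\He$.

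Summing the vertex identity over all interior vertices of a trapezoidal domain $T_{L,T}\subset\He$ of width $L$ and height $T$ telescopes, leaving only contributions from boundary mid-edges. Grouping these by side and gathering the winding phases yields an identity of the form
\[
  1 \;=\; c_\alpha A_{L,T} + c_\beta B_{L,T} + c_\epsilon E_{L,T},
\]
where $A_{L,T}$, $B_{L,T}$, $E_{L,T}\ge 0$ are the partition functions at $x=x_c$ of walks ending on the bottom, lateral sides, and top of $T_{L,T}$ respectively, with explicit positive constants $c_\alpha,c_\beta,c_\epsilon$. Monotonicity in $T$ allows passing to the limit $T\to\infty$, producing $c_\alpha A_L + c_\beta B_L \le 1$ for walks in the half-infinite strip of width $L$.

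Both bounds on $\mu$ then emerge from this limiting inequality. The upper bound $\mu\le 1/x_c$ uses the uniform estimate $B_L\le 1/c_\beta$: decomposing an arbitrary self-avoiding walk by its successive excursions between horizontal strips shows that $\sum_n c_n x_c^n$ grows at most polynomially, whence $c_n^{1/n}\to 1/x_c$ from above. The matching lower bound $\mu\ge 1/x_c$ is obtained by ruling out $A_L\to 0$: using the translation invariance of $\He$ to shift the base of $T_{L,T}$ and summing the resulting identities, one recovers a lower bound on the number of walks of length $\lesssim L$ that forces $\mu\ge 1/x_c$. I expect the principal obstacle to be the local vertex identity itself: pinning down the exact pair $(x_c,\sigma)=(1/\sqrt{2+\sqrt{2}},\,5/8)$ for which the three walk-contributions cancel, and verifying that the boundary sum on $T_{L,T}$ reorganises cleanly into nonnegative partition functions with positive coefficients. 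Every subsequent step reduces to sub-additivity or telescoping, but the whole proof rests on this single algebraic miracle.
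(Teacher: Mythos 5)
Your plan follows the paper's architecture exactly: the same observable, the same vertex identity at $(x_c,\sigma)=(1/\sqrt{2+\sqrt2},5/8)$, the same summation over a finite strip to get a boundary identity with positive coefficients, and the same split into an upper bound via bridge decomposition and a lower bound via that identity. However, two of your steps would not go through as described.

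First, in the vertex identity you group every walk ``in threes'' by extending or truncating through $v$. This only accounts for walks visiting one or two of the mid-edges $p,q,r$. A walk visiting all three cannot be extended (any extension self-intersects), and truncating it does not place it in one of your triplets. Such walks must be grouped in \emph{pairs}, each matched with the walk traversing the loop through $v$ in the opposite direction; their windings differ by $\mp 4\pi/3$, and the resulting cancellation $j\bar\lambda^4+\bar j\lambda^4=0$ is precisely what pins down $\sigma=5/8$, while the triplet cancellation $1+x_cj\bar\lambda+x_c\bar j\lambda=0$ then fixes $x_c$. Without the pair case the identity is false, so your ``$3\times3$ linear system'' is not the whole story.

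Second, and more seriously, your lower bound is not a proof. You weaken the boundary identity to $c_\alpha A_L+c_\beta B_L\le 1$ by discarding the term $E$ for walks exiting through the top and bottom, and then propose to ``rule out $A_L\to 0$'' by translation invariance. An upper bound on partition functions points in the wrong direction for proving $Z(x_c)=+\infty$, and $A_T$ in fact converges (it is increasing in $T$ and bounded), so there is nothing of that kind to rule out. The argument that works requires the \emph{exact} equality $1=c_\alpha A_T+B_T+c_\ep E_T$ and a dichotomy: if $E_T>0$ for some $T$, then monotonicity of $E_{T,L}$ in $L$ gives $Z(x_c)\ge\sum_L E_{T,L}=+\infty$; if $E_T=0$ for all $T$, then $1=c_\alpha A_T+B_T$ exactly, and the combinatorial inequality $A_{T+1}-A_T\le x_c\bigl(B_{T+1}\bigr)^2$ (cut a walk reaching the new column into two bridges) telescopes to $B_{T+1}+c_\alpha x_c B_{T+1}^2\ge B_T$, hence $B_T\ge c/T$ and $Z(x_c)\ge\sum_T B_T=+\infty$. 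None of this survives if the identity is only an inequality. Relatedly, your claim that the decomposition shows $\sum_n c_nx_c^n$ ``grows at most polynomially'' is not what Hammersley--Welsh yields (at $x_c$ it gives only $\e^{O(\sqrt n)}$); the clean route is to work at $x<x_c$, where $B_T^x\le(x/x_c)^T B_T^{x_c}\le(x/x_c)^T$ makes $\prod_T(1+B_T^x)^2$ converge and hence $Z(x)<\infty$.
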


It will be convenient to consider walks between \emph{mid-edges} of $\He$, \emph{i.e.} centers of edges of $\He$  (the set of mid-edges will be denoted by $H$). We will write $\g:a\rightarrow E$ if a walk $\g$ starts at $a$ and ends at some mid-edge of $E\subset H$. In the case $E=\{b\}$, we simply write $\g:a\rightarrow b$. The \emph{length} $\ell(\g)$ of the walk is the number of vertices visited by $\g$.

We will work with the partition function
\begin{equation*}
	Z(x)=\sum_{\g\ :\ a\rightarrow H}x^{\ell(\g)}\quad\in(0,+\infty].
\end{equation*}
This sum does not depend on the choice of $a$, and is increasing in $x$. 
Establishing the identity $\mu=\sqrt{2+\sqrt 2}$ is equivalent to showing that 
$Z(x)=+\infty$ for $x>1/\sqrt{2+\sqrt{2}}$ and $Z(x)<+\infty$ for $x<1/\sqrt{2+\sqrt 2}$. 
To this end, we analyze walks restricted to bounded domains and weighted
depending on their winding. The modified sum can be defined as 
a \emph{parafermionic observable} arising from a disorder operator. 
Such observables exist for other models, 
see \cite{CardyIkhlef,chelkak-smirnov-iso,smirnov-icm2010}. 

The paper is organized as follows. In Section 2, the parafermionic observable is introduced and its key property is derived. Section 3 contains the proof of Theorem~\ref{theorem}. Section 4 discusses conformal invariance conjectures for self-avoiding walks.
To simplify formul\ae, below we set  $x_c:=1/\sqrt{2+\sqrt 2}$ and $j=\e^{\I2\pi/3}$.

\section{Parafermionic observable}

A (hexagonal lattice) \emph{domain} $\Omega\subset H$ is a union of all mid-edges emanating from a given collection of vertices $V(\Omega)$ (see Fig. \ref{fig:SAWpicture}): a mid-edge $z$ belongs to $\Omega$ if at least one end-point of its associated edge is in $\Omega$, it belongs to $\partial \Omega$ if only one of them is in $\Omega$. We further assume $\Omega$ to be simply connected, \emph{i.e.} having a connected complement. 

\begin{figure}[ht]
    \begin{center}
      \includegraphics[width=0.6\hsize]{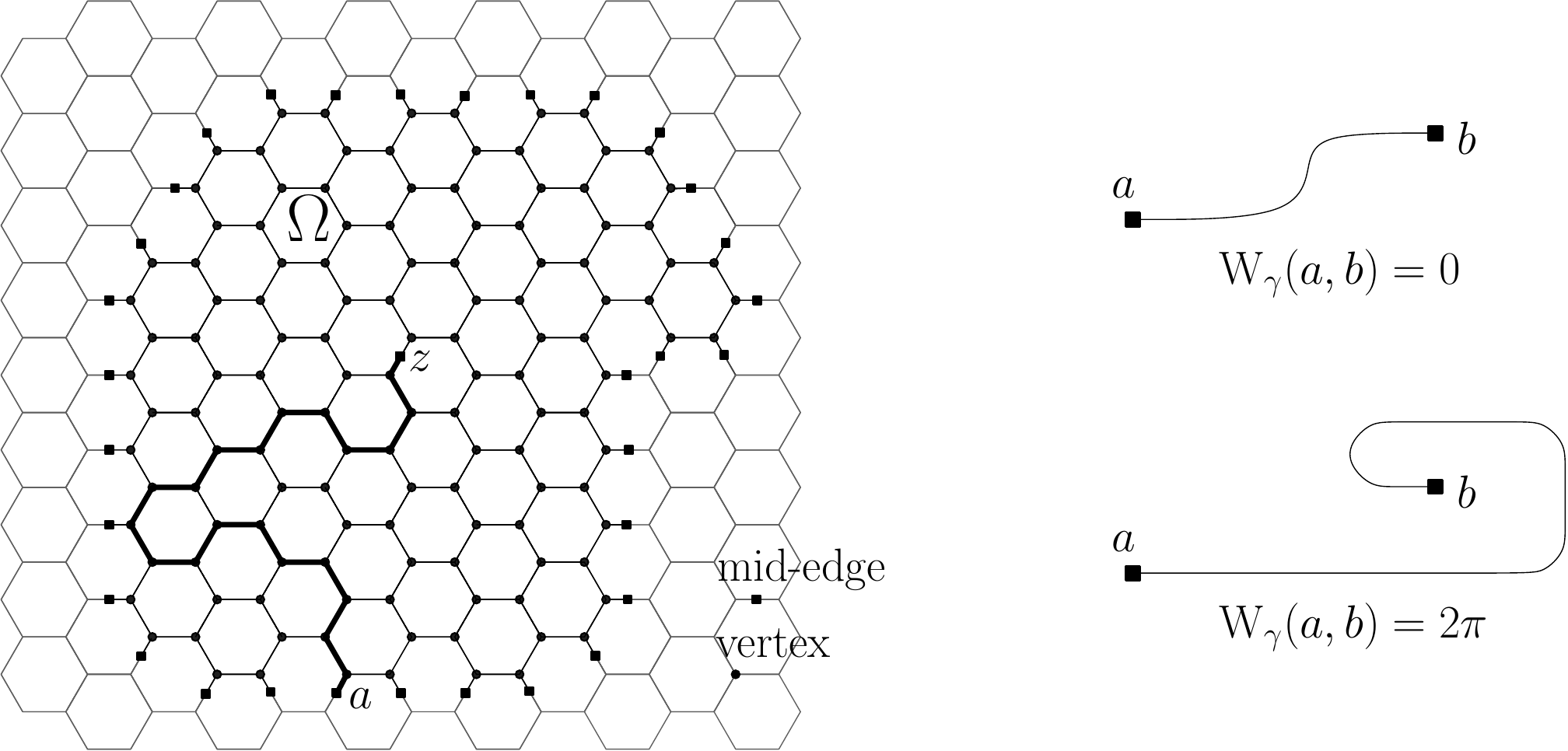}
     \end{center}
    \caption{\textbf{Left.} A domain $\Omega$ with boundary mid-edges 
labeled by small black squares, and vertices of $V(\Omega)$ labeled by circles. 
\textbf{Right.} Winding of a curve $\g$.}
    \label{fig:SAWpicture} 
  \end{figure}
  
For a self-avoiding walk $\g$ between mid-edges $a$ and $b$ (not necessarily the start and the end), we define its
\emph{winding $\W_\g(a,b)$} as the total rotation of the direction in radians when $\g$ is traversed from $a$ to $b$, see Fig.~\ref{fig:SAWpicture}. 

Our main tool is given by the following
\begin{definition}
The \emph{parafermionic observable} for $a\in \partial \Omega$, $z\in \Omega$, 
is defined by
\begin{equation*}
	F(z)=F(a,z,x,\sigma)=\sum_{\g\subset \Omega:\ a\rightarrow z}\e^{-\I\sigma \W_\g(a,z)} x^{\ell(\g)}.
\end{equation*}
\end{definition}

\begin{lemma}\label{lem:relation}
	If $x=x_c$ and $\sigma=\frac 58$, then $F$ satisfies the following relation for every vertex $v\in V(\Omega)$:
	\begin{equation}\label{relation around vertex}
		(p-v)F(p)+(q-v)F(q)+(r-v)F(r)=0,
	\end{equation}
	where $p,q,r$ are the mid-edges of the three edges adjacent to $v$.
\end{lemma}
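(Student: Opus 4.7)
The plan is to expand the left-hand side of \eqref{relation around vertex} as the walk sum
\begin{equation*}
  S \;=\; \sum_{\gamma:\,a\to\{p,q,r\}} (z_\gamma-v)\,\e^{-\I\sigma\W_\gamma(a,z_\gamma)}\,x^{\ell(\gamma)},
\end{equation*}
where $z_\gamma\in\{p,q,r\}$ denotes the endpoint, and to show $S=0$ by partitioning the walks into triplets whose contributions cancel. Since $(p-v)$, $(q-v)$, $(r-v)$ are three vectors of equal modulus whose arguments $\alpha_p,\alpha_q,\alpha_r$ differ by $2\pi/3$, it is natural to combine three walks ending one at each of $p$, $q$, $r$.

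I first classify each walk by its interaction with $v$: either (A) $\gamma$ avoids $v$ entirely; (B) $\gamma$ ends with $v$ as its last vertex; or (C) $\gamma$ visits $v$ as an internal vertex. The usual no-retracing convention forces the mid-edges used at $v$ in case (B) to be distinct from the endpoint, and in case (C) to be precisely the two mid-edges of $\{p,q,r\}$ other than the endpoint.

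The main step is the triplet bijection that assigns to each type-(A) walk $\gamma$ of length $\ell$ ending at $z$ (with last vertex $z'$) the two type-(B) walks of length $\ell+1$ obtained by appending the extra vertex $v$ and terminating at the other two mid-edges of $\{p,q,r\}$. The three walks share their initial segment up to $z'$, so the two extensions' windings differ from $\W_\gamma$ by $\pm\pi/3$, the two possible turns at $v$. Writing $(z-v)=\tfrac12\e^{\I\alpha_z}$ with the neighbouring endpoints at arguments $\alpha_z\pm 2\pi/3$, the combined contribution of the triplet factors as
\begin{equation*}
  \tfrac12\,\e^{\I\alpha_z}\,\e^{-\I\sigma\W_\gamma}\,x^{\ell}\,\bigl[\,1+2x\cos(2\pi/3+\sigma\pi/3)\,\bigr].
\end{equation*}
With $\sigma=\tfrac58$ we compute $2\pi/3+\sigma\pi/3=7\pi/8$ and $2\cos(7\pi/8)=-\sqrt{2+\sqrt 2}=-1/x_c$, so the bracketed factor vanishes exactly at $x=x_c$; this is precisely what pins down the two parameter values in the lemma.

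The remaining and most delicate point is to handle type-(C) walks, which cannot be produced as extensions of type-(A) walks since appending $v$ would violate self-avoidance. The approach will be to arrange them into a parallel family of triplets, each consisting of three type-(C) walks ending respectively at $p$, $q$, $r$ and using at $v$ cyclically permuted pairs of mid-edges while agreeing outside a neighbourhood of $v$. The same angle and turn bookkeeping should reproduce the bracketed factor $1+2x\cos(2\pi/3+\sigma\pi/3)$, so that each such triplet again vanishes at $(x,\sigma)=(x_c,5/8)$. The combinatorial heart of the proof is arranging this second correspondence so that every walk in $\Omega$ belongs to exactly one triplet, with no double counting between the (A)/(B) and (C) pairings; granted this, $S$ vanishes triplet-by-triplet.
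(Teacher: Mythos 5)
Your classification into types (A), (B), (C) matches the paper's grouping by the number of mid-edges of $\{p,q,r\}$ visited (one, two, three respectively), and your (A)--(B) triplet cancellation is exactly the paper's: the factor $1+2x\cos(2\pi/3+\sigma\pi/3)=1-x\sqrt{2+\sqrt2}$ at $\sigma=5/8$ is correct and is indeed the only place where $x=x_c$ is used. The gap is in the type-(C) walks, which you leave unresolved and for which your proposed mechanism is wrong. A walk ending at $q$ and passing through $v$ uses the \emph{full} edges at $p$ and $r$ and only half the edge at $q$; a walk ending at $p$ through $v$ uses the full edges at $q$ and $r$. These have genuinely different edge sets away from $v$, so there is no family of triplets of type-(C) walks ``agreeing outside a neighbourhood of $v$'' with cyclically permuted connections. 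Moreover, even formally the bookkeeping cannot reproduce your bracketed factor: all walks in such a grouping would have equal length (no extra power of $x$), and the relevant winding increments are $\pm 4\pi/3$, not $\pm\pi/3$.

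The correct treatment (the paper's) is a \emph{pairing}, not a triplet: a type-(C) walk consists of a path to one mid-edge, say $p$, followed by an almost-complete loop from $v$ back to $v$; pair it with the walk traversing that loop in the opposite direction. The two walks end at $q$ and $r$, have equal length, and their windings past $p$ are $\mp 4\pi/3$ (here simple connectivity of $\Omega$ and $a\in\partial\Omega$ are used to fix the loop's winding). Their contributions sum to a multiple of $j\bar\lambda^4+\bar j\lambda^4=2\cos(2\pi/3+4\sigma\pi/3)$, which vanishes at $\sigma=5/8$ (it equals $2\cos(3\pi/2)$) independently of $x$. This second identity is what actually pins down $\sigma$; your triplet identity alone only determines $x$ as a function of $\sigma$, so without it your argument cannot single out $\sigma=5/8$ and the lemma is not proved.
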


Note that with $\sigma=5/8$, the complex weight $\e^{-\I\sigma \W_\g(a,z)}$ 
can be interpreted as a product of terms $\lambda$ or $\bar{\lambda}$ 
per left or right turn of $\g$ drawn from $a$ to $z$, with
 $$\lambda=\exp \left(-\I \frac{5}{8}\cdot \frac{\pi}{3}\right)=
\exp \left(-\I \frac{5\pi}{24}\right).$$

\begin{proof}
We start by choosing notation so that $p$, $q$ and $r$ 
follow counter-clockwise around $v$. 
Note that the left-hand side of \eqref{relation around vertex}
can be expanded into the sum of contributions $c(\g)$ 
of all possible walks $\g$ finishing at $p,q$ or $r$. 
For instance, if a walk ends at the mid-edge $p$, its contribution 
will be given by
$$c(\g)=(p-v)\cdot \e^{-\I\sigma \W_{\g}(a,p)}~x_c^{\ell(\g)}.$$
One can partition the set of walks $\g$ finishing at $p,q$ or $r$ into pairs and triplets of walks in the following way, see Fig \ref{fig:pairs}:
\begin{itemize}
\item If a walk $\g_1$ visits all three mid-edges $p,q,r$, it means that the edges belonging to $\g_1$ form a disjoint self-avoiding path plus (up to a half-edge) a self-avoiding loop from $v$ to $v$. One can associate to $\g_1$ the walk passing through the same edges, but exploring the loop from $v$ to $v$ in the other direction. Hence, walks visiting the three mid-edges can be grouped in pairs.

\item If a walk $\g_1$ visits only one mid-edge, it can be associated to two walks $\g_2$ and $\g_3$ that visit exactly two mid-edges by prolonging the walk one step further (there are two possible choices). The reverse is true: a walk visiting exactly two mid-edges is naturally associated to a walk visiting only one mid-edge by erasing the last step. Hence, walks visiting one or two mid-edges can be grouped in triplets.
\end{itemize}
If one can prove that the sum of contributions to \eqref{relation around vertex}
of each pair or triplet vanishes, then their total sum is zero, and \eqref{relation around vertex} holds. 

Let $\g_1$ and $\g_2$ be two associated walks as in the first case. 
Without loss of generality, we may assume that $\g_1$ ends at $q$ and $\g_2$ ends at $r$. Note that $\g_1$ and $\g_2$ coincide up to the mid-edge $p$,
and then follow an almost complete loop in two opposite directions.
It follows that
$$\ell(\g_1)=\ell(\g_2)\quad\quad \text{and}\quad \quad \left\{\ \substack{
	\W_{\g_1}(a,q)=\W_{\g_1}(a,p)+\W_{\g_1}(p,q)=\W_{\g_1}(a,p)-\frac{4\pi}{3}\\ \ \\
	\W_{\g_2}(a,r)=\W_{\g_2}(a,p)+\W_{\g_2}(p,r)=\W_{\g_1}(a,p)+\frac{4\pi}{3}}\right.\quad.$$
In order to evaluate the winding of $\g_1$ between $p$ and $q$ above, 
we used the fact that $a$ is on the boundary and $\Omega$ is simply connected. 
We conclude that
\begin{align*}c(\g_1)+c(\g_2)&=(q-v)\e^{-\I\sigma \W_{\g_1}(a,q)} x_c^{\ell(\g_1)}+(r-v)\e^{-\I\sigma \W_{\g_2}(a,r)} x_c^{\ell(\g_2)}\\
&=(p-v)\e^{-\I\sigma \W_{\g_1}(a,p)}x_c^{\ell(\g_1)}\left(j\bar{\lambda}^4+\bar{j}\lambda^4\right)=0
\end{align*}
where the last equality holds since $j\bar{\lambda}^4=-i$  by our choice
of $\lambda=\exp (-\I5\pi/24)$. 

Let $\g_1,\g_2,\g_3$ be three walks matched as in the second case. Without loss of generality, we assume that $\g_1$ ends at $p$ and that $\g_2$ and $\g_3$ extend $\g_1$ to $q$ and $r$ respectively. As before, we easily find that

$$\ell(\g_2)=\ell(\g_3)=\ell(\g_1)+1\quad\quad\text{and}\quad\quad\left\{\ \substack{\W_{\g_2}(a,r)=\W_{\g_2}(a,p)+\W_{\g_2}(p,q)=\W_{\g_1}(a,p)-\frac{\pi}{3}\\ \ \\
	\W_{\g_3}(a,r)=\W_{\g_3}(a,p)+\W_{\g_3}(p,r)=\W_{\g_1}(a,p)+\frac{\pi}{3}}\right.\quad.$$
Plugging these values into the respective contributions, we obtain
\begin{align*}c(\g_1)+c(\g_2)+c(\g_3)&=(p-v)\e^{-\I\sigma \W_{\g_1}(a,p)}x_c^{\ell(\g_1)}\left(1+x_cj\bar{\lambda}+x_c\bar{j}\lambda\right)=0.
\end{align*}
Above is the \emph{only} place where we use that $x$ takes its critical value, i.e. 
$x_c^{-1}=\sqrt{2+\sqrt2}=(2\cos \frac{\pi}{8})$. 

The claim of the lemma follows readily by summing over all pairs and triplets.
\end{proof}

\begin{figure}[ht]
    \begin{center}
      \includegraphics[width=0.90\hsize]{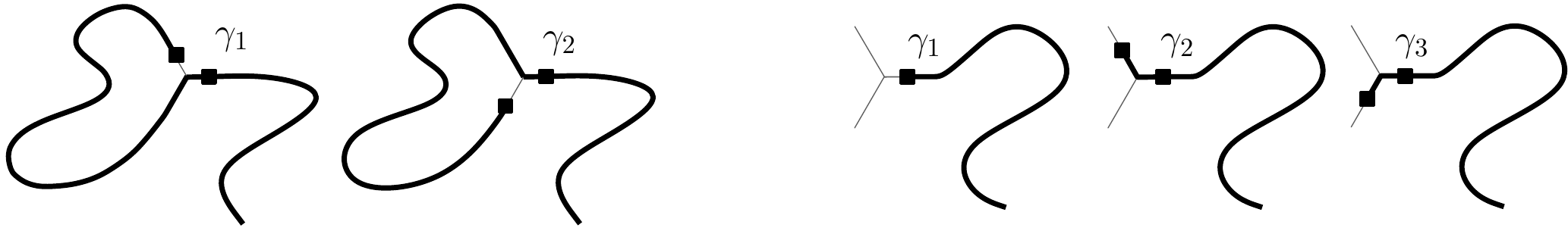}
     \end{center}
    \caption{\textbf{Left: } a pair of walks visiting all the three mid-edges emanating from $v$ and differing by rearranged connections at $v$. 
\textbf{Right:} a triplet of walks, one visiting one mid-edge, the two others visiting two mid-edges, and obtained by prolonging the first one through $v$.}
    \label{fig:pairs} 
  \end{figure} 
\begin{remark}\label{rem:CR}
	Coefficients in \eqref{relation around vertex} are three cube roots of unity multiplied by $p-v$, so its left-hand side can be seen as a discrete $dz$-integral along an elementary contour on the dual lattice. The fact that the integral of the parafermionic observable along discrete contours vanishes suggests that it is 
discrete holomorphic and that self-avoiding walks have a conformally invariant scaling limit, see Section 4. 
\end{remark}

\section{Proof of Theorem~\ref{theorem}}
\paragraph{Counting argument in a strip domain.} We consider a vertical strip domain $S_T$ composed of $T$ strips of hexagons, and its finite version $S_{T,L}$ cut at heights $\pm L$ at angles  $\pm \pi/3$, see Fig. \ref{fig:domain}. Namely, position a hexagonal lattice $\He$ of meshsize 1 in $\C$ so that there exists a horizontal edge $e$ with mid-edge $a$ being 0. Then
\begin{align*}
	V(S_T)&=\{z\in V(\He):0\leq {\rm Re}(z)\leq \frac{3T+1}{2} \},\\
	V(S_{T,L})&=\{z\in V(S_T):|\sqrt 3~{\rm Im}(z)-{\rm Re}(z)|\leq 3L\}.
\end{align*}
Denote by $\alpha$ the left boundary of $S_{T}$, by $\beta$ the right one. Symbols $\ep$ and $\bar{\ep}$ denote the top and bottom boundaries of $S_{T,L}$. Introduce the following (positive) partition functions:
$$
	A_{T,L}^x:=\sum_{\substack{\g\subset S_{T,L} :\  a\rightarrow\alpha\setminus \{a\}}}x^{\ell(\gamma)},~~
	B_{T,L}^x:=\sum_{\substack{\g\subset S_{T,L}:\ a\rightarrow\beta}}x^{\ell(\gamma)},~~
	E_{T,L}^x:=\sum_{\substack{\g\subset S_{T,L}:\ a\rightarrow\ep\cup\bar{\ep}}}x^{\ell(\gamma)}.
$$
In the next lemma, we deduce from relation \eqref{relation around vertex}
a global identity without the complex weights.
\begin{figure}[ht]
    \begin{center}
      \includegraphics[width=0.30\hsize]{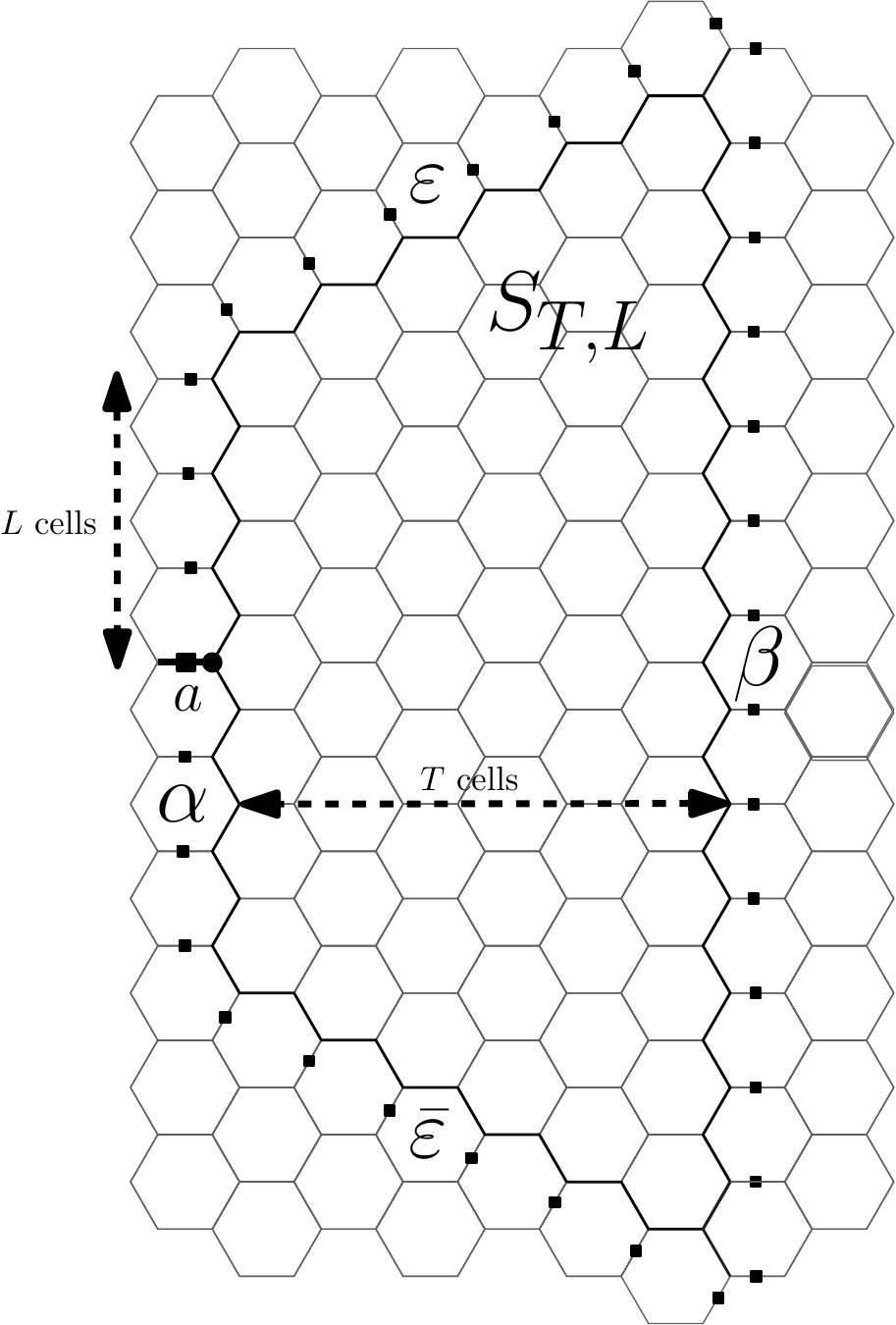}
     \end{center}
    \caption{Domain $S_{T,L}$ and boundary intervals $\alpha$, $\beta$, $\ep$ and $\bar{\ep}$.}
    \label{fig:domain} 
  \end{figure}
\begin{lemma}
	For critical $x=x_c$, the following identity holds
	\begin{equation}\label{equation box}
		1=c_{\alpha}A^{x_c}_{T,L}+B^{x_c}_{T,L}+c_{\ep}E^{x_c}_{T,L},
	\end{equation}
	with positive coefficients $c_{\alpha}=\cos \left(\frac {3\pi}{8}\right)$ and $c_{\ep}=\cos \left(\frac{\pi}{4}\right)$.
\end{lemma}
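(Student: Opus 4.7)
The plan is to sum the local identity \eqref{relation around vertex} over all $v\in V(S_{T,L})$ and read off the boundary contribution. Every interior mid-edge $z$ is adjacent to two vertices $v_1,v_2\in V(S_{T,L})$ with $v_1+v_2=2z$, so the terms $(z-v_1)F(z)+(z-v_2)F(z)=0$ telescopically cancel. Only boundary mid-edges survive, leaving
\begin{equation*}
\sum_{z\in\partial S_{T,L}} (z-v_z)\,F(z)\;=\;0,
\end{equation*}
where $v_z$ denotes the unique inner vertex adjacent to $z$.

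I would then separate out the contribution of $z=a$. Since a non-empty self-avoiding walk returning to $a$ would have to revisit its unique adjacent inner vertex, only the empty walk contributes to $F(a)$ and hence $F(a)=1$. Moreover, because $S_{T,L}$ is simply connected, any two walks $\g:a\to z$ inside it are homotopic and therefore share a single real winding $\W_z$ depending only on $z$, so
\begin{equation*}
F(z)\;=\;\e^{-\I\sigma\W_z}\,P_z,\qquad P_z:=\sum_{\g\subset S_{T,L}:\,a\to z} x_c^{\ell(\g)}\;\geq\;0.
\end{equation*}
Writing $z-v_z=\tfrac12\,\e^{\I\arg(z-v_z)}$ and noting that $a-v_a=-\tfrac12$ is real, taking real parts of the above identity yields
\begin{equation*}
1\;=\;\sum_{z\in\partial\setminus\{a\}}\cos\!\bigl(\arg(z-v_z)-\sigma\W_z\bigr)\,P_z.
\end{equation*}

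It remains to verify that the cosine coefficient is constant on each boundary piece and to identify its value. The cut angles $\pm\pi/3$ are chosen precisely so that each of $\alpha$, $\beta$, $\ep$, $\bar\ep$ contains only mid-edges of a single orientation, with outward directions $\pi$, $0$, $2\pi/3$, $-2\pi/3$ respectively. The first step of any walk from $a$ goes to $v_a$ in the horizontal direction, so the winding $\W_z$ equals the total rotation from horizontal to $\arg(z-v_z)$, which by simple connectivity equals $0$ on $\beta$, $\pm\pi$ on $\alpha$, and $\pm 2\pi/3$ on $\ep\cup\bar\ep$ (the signs being tied to upper/lower halves). With $\sigma=\tfrac58$, the phase $\arg(z-v_z)-\sigma\W_z$ evaluates to $0$ on $\beta$, $\pm\tfrac{3\pi}{8}$ on $\alpha$, and $\pm\tfrac{\pi}{4}$ on $\ep\cup\bar\ep$; its cosine is therefore $1$, $\cos\tfrac{3\pi}{8}=c_\alpha$, and $\cos\tfrac{\pi}{4}=c_\ep$ respectively, and summing the resulting contributions gives \eqref{equation box}.

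The principal obstacle is this last geometric bookkeeping: verifying that every mid-edge on a given boundary segment indeed shares a common outward direction, and that the common winding value is as stated. Both facts rest on the simple connectedness of $S_{T,L}$ together with the deliberate $\pm\pi/3$ tilt of the cuts, which matches the lattice directions so that $\ep$ and $\bar\ep$ each consist of uniformly oriented mid-edges and a single coefficient $c_\ep$ can be read off.
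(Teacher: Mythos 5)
Your proof is correct and follows essentially the same route as the paper: sum the local relation over all vertices of $S_{T,L}$, cancel the interior mid-edges, use $F(a)=1$ and the fact that the winding to each boundary mid-edge is deterministic, and read off the real coefficients $1$, $\cos\frac{3\pi}{8}$, $\cos\frac{\pi}{4}$ (the paper extracts them via the reflection symmetry $F(\bar z)=\overline{F(z)}$ rather than by taking real parts of the full identity, but the two computations are equivalent). One small caveat: the deterministic winding does not follow from homotopy of paths --- total turning is not a homotopy invariant --- but from self-avoidance (the walk concatenated with a boundary arc bounds a Jordan curve); the paper, however, also asserts these winding values without further justification, so this does not put you below its level of rigor.
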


\begin{proof}
Sum the relation \eqref{relation around vertex} over all vertices in $V(S_{T,L})$. Values at interior mid-edges disappear and we arrive at the identity
\begin{equation}\label{sum}
	0=-\sum_{z\in \alpha}F(z)+\sum_{z\in \beta}F(z)+j\sum_{z\in \ep}F(z)+\bar{j}\sum_{z\in \bar{\ep}}F(z).
\end{equation}
The symmetry of our domain implies that $F(\bar{z})=\bar{F}(z)$, 
where $\bar{x}$ denotes the complex conjugate of $x$.
Observe that the winding of any self-avoiding walk from $a$ to the bottom part of $\alpha$ is $-\pi$ while the winding to the top part is $\pi$. Thus
\begin{align*}
	\sum_{z\in \alpha}F(z)&=F(a)+\sum_{z\in \alpha\setminus \{a\}}F(z)
=F(a)+\frac12\sum_{z\in \alpha\setminus \{a\}}\left(F(z)+F(\bar z)\right)\\
&=1+\frac{\e^{-\I\sigma \pi}+\e^{\I\sigma \pi}}{2}A^x_{T,L}=1-\cos \left(\frac{3\pi}{8}\right)\ A^x_{T,L}=1-c_{\alpha}A^x_{T,L}.
\end{align*}
Above we have used the fact that the only walk from $a$ to $a$ is a trivial one
of length $0$, and so $F(a)=1$. Similarly, the winding from $a$ to any half-edge in $\beta$ (resp. $\ep$ and $\bar{\ep}$) is 0 (resp. $\frac {2\pi}3$ and $-\frac{2\pi}3$), therefore 
$$\sum_{z\in \beta}F(z)=B^x_{T,L}\quad \text{and}\quad j\sum_{z\in \ep}F(z)+\bar{j}\sum_{z\in \bar{\ep}}F(z)=\cos  \left(\frac\pi 4\right)\  E^x_{T,L}=c_{\ep}E^x_{T,L}.$$
The lemma follows readily by plugging the last three formul\ae\ into \eqref{sum}.
\end{proof}

Observe that sequences $(A^x_{T,L})_{L>0}$ and $(B^x_{T,L})_{L>0}$ are increasing in $L$ and are bounded for $x\leq x_c$ thanks 
to \eqref{equation box} and their monotonicity in $x$. Thus they have limits
$$
	A^x_T:=\lim_{L\rightarrow \infty}A^x_{T,L}=\sum_{\g\subset S_T:\ a\rightarrow\alpha\setminus \{a\}} x^{\ell(\g)},~~~~
	B^x_T:=\lim_{L\rightarrow \infty}B^x_{T,L}=\sum_{\g\subset S_T:\ a\rightarrow \beta} x^{\ell(\g)}.
$$
Identity \eqref{equation box} then implies that $(E^{x_c}_{T,L})_{L>0}$ decreases and converges to a limit $E^{x_c}_T=\lim_{L\rightarrow \infty}E^{x_c}_{T,L}$. 
Passing to a limit in \eqref{equation box}, we arrive at
\begin{equation}\label{equation strip}
	1=c_{\alpha}A^{x_c}_T+B^{x_c}_T+c_{\ep}E^{x_c}_T.
\end{equation}
  
\begin{proof}[Theorem \ref{theorem}]
We start by proving that $Z(x_c)=+\infty$, and hence $\mu\geq \sqrt{2+\sqrt 2}$. 
Suppose that for some $T$, $E^{x_c}_T>0$. As noted before, 
$E^{x_c}_{T,L}$ decreases in $L$ and so
\begin{equation*}
	Z(x_c)\geq \sum_{L>0}E^{x_c}_{T,L}\geq \sum_{L>0} E^{x_c}_T=+\infty,
\end{equation*}
which completes the proof. 

Assuming on the contrary that $E^{x_c}_T=0$ for all $T$, we simplify  \eqref{equation strip} to
\begin{equation}\label{infinite strip}
	1=c_{\alpha}A_T^{x_c}+B_T^{x_c}.
\end{equation}
Observe that a walk $\gamma$ entering into the count of $A_{T+1}^{x_c}$ 
and not into $A_T^{x_c}$ has to visit some vertex adjacent 
to the right edge of $S_{T+1}$. 
Cutting $\gamma$ at the first such point (and adding half-edges to the two halves), 
we uniquely decompose it into two walks crossing $S_{T+1}$ 
(these walks are usually called bridges), 
which together are one step longer than $\gamma$. 
We conclude that
\begin{equation}\label{rec relation}
	A_{T+1}^{x_c}-A_T^{x_c}\leq x_c\left(B_{T+1}^{x_c}\right)^2.
\end{equation}
Combining \eqref{infinite strip} for two consecutive values of $T$ with 
\eqref{rec relation}, we can write
\begin{align*}
0&=1-1=(c_{\alpha}A_{T+1}^{x_c}+B_{T+1}^{x_c})-(c_{\alpha}A_{T}^{x_c}+B_{T}^{x_c})\\
&=c_{\alpha}(A_{T+1}^{x_c}-A_T^{x_c})+B_{T+1}^{x_c}-B_T^{x_c}
\leq c_{\alpha}x_c\left(B_{T+1}^{x_c}\right)^2+B_{T+1}^{x_c}-B_T^{x_c},
\end{align*}
and so
$$c_{\alpha}x_c\left(B_{T+1}^{x_c}\right)^2+B_{T+1}^{x_c}\geq B_T^{x_c}.$$
It follows easily by induction, that
$$B_T^{x_c}\geq  {\min [B_1^{x_c},1/(c_{\alpha}x_c)]}~/~{T}$$
for every $T\geq1$, and therefore
$$Z(x_c)\geq \sum_{T>0}B_T^{x_c}=+\infty.$$
This completes the proof of the estimate $\mu\geq x_c^{-1}= \sqrt{2+\sqrt2}$.

It remains to prove the opposite inequality $\mu\leq x_c^{-1}$. 
To estimate the partition function from above, we will decompose self-avoiding walks into bridges.
A \emph{bridge} of width $T$ is a self-avoiding walk in $S_T$ from one side to the opposite side, defined up to vertical translation. The partition function of bridges of width $T$ is $B_T^x$, which is at most $1$ by \eqref{equation strip}.
Noting that a bridge of width $T$ has length at least $T$, we obtain for $x<x_c$
\begin{equation*}
	B_T^x\leq \left(\frac{x}{x_c}\right)^TB_T^{x_c}\leq \left(\frac{x}{x_c}\right)^T.
\end{equation*}
Thus, for $x<x_c$, the series $\sum_{T>0} B_T^x$ converges and so does the product $\prod_{T>0} (1+B_T^x)$. Let us assume for the moment the following fact: 
\emph{any self-avoiding walk can be canonically decomposed into a sequence of bridges of widths $T_{-i}<\cdots<T_{-1}$ and $T_0>\cdots>T_j$, and, if one fixes the starting mid-edge and the first vertex visited, the decomposition uniquely determines the walk}. Such decomposition was first introduced by Hammersley and Welsh in \cite{HammersleyWelsh} (for a modern treatment, see Section 3.1 of \cite{MadrasSlade}). Applying the decomposition to walks starting at $a$ (the first visited vertex is 0 or -1), we can estimate
$$Z(x)\leq 2\sum_{\substack{T_{-i}<\cdots<T_{-1}\\
T_j<\cdots<T_0}} \left(\prod_{k=-i}^jB_{T_k}^x\right)=2\prod_{T>0}(1+B_T^x)^2<\infty.
$$
The factor 2 is due to the fact that there are two possibilities for the first vertex once we fix the starting mid-edge. Therefore, $Z(x)<+\infty$ whenever $x<x_c$ and $\mu\leq x_c^{-1}=\sqrt{2+\sqrt2}$. To complete the proof of the theorem it only remains to prove that such a decomposition into bridges does exist. Once again, this fact is well-known \cite{MadrasSlade,HammersleyWelsh}, but we include the proof for completeness.
\begin{figure}[ht]
    \begin{center}
      \includegraphics[width=0.60\hsize]{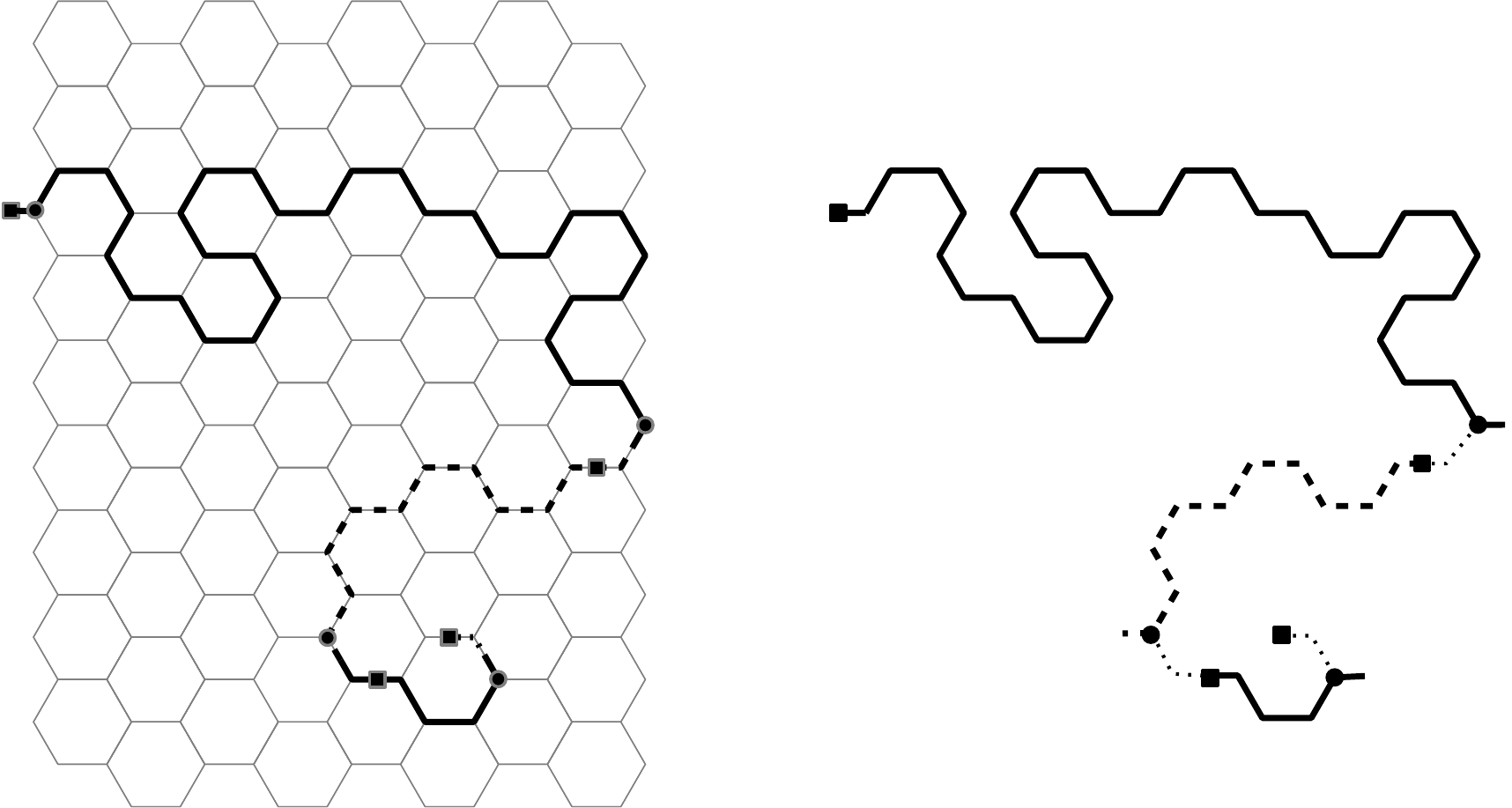}
     \end{center}
    \caption{\textbf{Left:} Decomposition of a half-plane walk into four bridges with widths $8>3>1>0$. The first bridge corresponds to the maximal bridge containing the origin. Note that the decomposition contains one bridge of width 0. \textbf{Right:} The reverse procedure. If the starting mid-edge and the first vertex are fixed, the decomposition is unambiguous.}
    \label{fig:decomposition} 
  \end{figure}

First assume that $\tilde{\g}$ is a half-plane self-avoiding walk, meaning that the start of $\tilde{\g}$ has extremal real part: we prove by induction on the width $T_0$ that the walk admits a canonical decomposition into bridges of widths $T_0>\cdots>T_j$. Without loss of generality, we assume that the start has minimal real part. Out of the vertices having the maximal real part, choose the one visited last, say after $n$ steps. The $n$ first vertices of the walk form a bridge $\tilde{\g}_1$ of width $T_0$, which is the first bridge of our decomposition when prolonged to the mid-edge on the right of the last vertex. We forget about the $(n+1)$-th vertex, since there is no ambiguity in its position. The consequent steps form a half-plane walk $\tilde{\g}_2$ of width $T_1<T_0$. Using the induction hypothesis, we know that $\tilde{\g}_2$ admits a decomposition into bridges of widths $T_1>\cdots>T_j$. The decomposition of $\tilde{\g}$ is created by adding $\tilde{\g}_1$ before the decomposition of $\tilde{\g}_2$. 

If the walk is a reverse half-plane self-avoiding walk, meaning that the end has extremal real part, we set the decomposition to be the decomposition of the reverse walk in the reverse order. If $\g$ is a self-avoiding walk in the plane, one can cut the trajectory into two pieces $\g_1$ and $\g_2$: the vertices of $\g$ up to the first vertex of maximal real part, and the remaining vertices. The decomposition of $\g$ is given by the decomposition of $\g_1$ (with widths $T_{-i}<\cdots <T_{-1}$) plus the decomposition of $\g_2$ (with widths $T_0>\cdots >T_j$). 

Once the starting mid-edge and the first vertex are given, it is easy to check that the decomposition uniquely determines the walk by exhibiting the reverse procedure, see Fig. \ref{fig:decomposition} for the case of half-plane walks.\end{proof}

\begin{remark}\label{remark}
The proof provides bounds for the number of bridges from $a$ to the right side of the strip of width $T$, namely, 
$$\frac{c}{T}\leq B_T^{x_c}\leq 1.$$
In paragraphs 3.3.3 and 3.4.3 of \cite{LawlerSchrammWerner}, precise behaviors are conjectured for the number of self-avoiding walks between two points on the boundary of a domain, which yields the following (conjectured) estimate:
$$\sum_{\gamma\subset S_T:0\rightarrow T+{\rm i}yT} x_c^{\ell(\gamma)} \approx T^{-5/4} H(0,1+{\rm i}y)^{5/4}$$
where $H$ is the boundary derivative of the Poisson kernel. Integrating with respect to $y$, we obtain that $B_T^{x_c}$ should decay as $T^{-1/4}$ when $T$ goes to infinity. Similar estimates are conjectured for walks in $S_T$ from $0$ to ${\rm i}yT$. 
\end{remark}

\section{Conjectures}

In \cite{Nienhuis,Nienhuis-jsp}, Nienhuis proposed a more precise 
asymptotical behavior for the number of self-avoiding walks:
\begin{equation}\label{number}c_n~\sim~ A~n^{\gamma-1}~\sqrt{2+\sqrt 2}^{~n},\end{equation}
with $\gamma=43/32$. 
Here the symbol $\sim$ means that the ratio of two sides 
is of the order $n^{o(1)}$, or perhaps even tends to a constant.
Moreover, Nienhuis gave arguments in support of Flory's prediction
that the
mean-square displacement $\langle |\gamma(n)|^2\rangle$ satisfies
\begin{equation}\label{mean-square}\langle |\gamma(n)|^2\rangle~=~\frac1{c_n}~\sum_{\gamma~n-\text{step SAW}}~|\gamma(n)|^2~=~n^{2\nu+o(1)}~,\end{equation} 
with $\nu=3/4$. 
Despite the precision of the predictions \eqref{number} and \eqref{mean-square}, the best rigorously known bounds are very far apart and almost 50 years old (see \cite{MadrasSlade} for an exposition). The derivation of these exponents seems to be 
one of the most challenging problems in probability. 
%In \cite{HammersleyWelsh}, Hammersley and Welsh proved that, 
%\begin{equation}
%\mu^n~\le~c_n~\le~e^{\kappa \sqrt n}\mu^n.
%\end{equation}
%(the lower bound is just subadditivity, the upper bound is non-trivial). Concerning the mean-square displacement, nothing is known. Note that they proved these results without knowing the value of the connective constant. 

It was shown by G.~Lawler, O.~Schramm and W.~Werner in \cite{LawlerSchrammWerner} that $\gamma$ and $\nu$ could be computed 
if the  self-avoiding walk would posses a conformally invariant scaling limit. 
More precisely, let $\Omega\neq \mathbb C$ be a simply connected domain in the complex plane $\mathbb C$ with two points $a$ and $b$ on the boundary. For $\delta>0$, we consider the discrete approximation given by the largest finite domain $\Omega_\delta$ of $\delta\mathbb H$ included in $\Omega$, and $a_\delta$ and $b_\delta$ to be the vertices of $\Omega_\delta$ closest to $a$ and $b$ respectively. A probability measure $\mathbb P_{x,\delta}$ is defined on the set of self-avoiding trajectories $\gamma$ between $a_\delta$ and $b_\delta$ that remain in $\Omega_\delta$ by assigning to $\gamma$ a weight proportional to $x^{\ell(\gamma)}$. We obtain a random 
curve denoted $\gamma_\delta$. Conjectured conformal invariance of self-avoiding walks 
can be stated as follows, see \cite{LawlerSchrammWerner}:

\begin{conjecture}\label{SLE}
Let $\Omega$ be a simply connected domain (not equal to $\mathbb C$) with two distinct points $a$, $b$ on its boundary. For $x=x_c$, the law of $\gamma_\delta$ in $(\Omega_\delta,a_\delta,b_\delta)$ converges when $\delta\rightarrow 0$ to the (chordal) Schramm-Loewner Evolution with parameter $\kappa=8/3$ in $\Omega$ from $a$ to $b$.
\end{conjecture}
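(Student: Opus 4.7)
}
The natural route is Smirnov's paradigm: use the parafermionic observable of Lemma~\ref{lem:relation} to produce, in the scaling limit, a conformally covariant holomorphic function which supplies a martingale for the exploration process, and thereby pins down the driving term of the Loewner equation. The choice $\sigma=5/8$ is not ad hoc but exactly the spin matched to $\kappa=8/3$ under the standard SLE--parafermion correspondence, so the framework is internally consistent. The plan has three pieces: tightness of $\g_\delta$, convergence of the renormalized observable to an explicit limit, and a martingale argument identifying $\kappa=8/3$.

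First I would establish tightness of $(\g_\delta)_{\delta>0}$ in the topology of curves modulo reparametrization. The seed is the bound $B_T^{x_c}\leq 1$ from \eqref{equation strip} together with the matching lower bound $B_T^{x_c}\geq c/T$ obtained in the proof of Theorem~\ref{theorem}; these should be upgraded, through Hammersley--Welsh-type surgery, into Russo--Seymour--Welsh style annulus crossing estimates that yield tightness and force subsequential limits to be simple curves. Next, rescale the observable as $f_\delta(z):=\delta^{-5/8}\,F(a_\delta,z,x_c,5/8)$ and show that $f_\delta\to f$, where $f$ is the unique holomorphic function on $\Omega$ with prescribed singular behaviour at $a$ and the covariant boundary condition $f(z)\,\tau(z)^{5/8}\in\mathbb{R}$ along $\partial\Omega\setminus\{a,b\}$ ($\tau$ being the unit tangent). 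Finally, conditioning on $\g_\delta[0,k]$ and reading off the observable on the slit domain $\Omega_\delta\setminus\g_\delta[0,k]$ produces a discrete martingale of the exploration; passing to the limit and using conformal covariance yields a continuous martingale of the Loewner driving term $W_t$ whose quadratic variation must equal $\tfrac{8}{3}t$, forcing $W_t=\sqrt{8/3}\,B_t$.

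The decisive obstacle is the convergence step. Lemma~\ref{lem:relation} supplies \emph{one} complex-valued identity per vertex---the vanishing of a discrete $dz$-integral over an elementary dual triangle---whereas discrete holomorphicity of a mid-edge function is a two-dimensional condition, and in every successful realization of this program (FK-Ising, critical Ising, dimers) a second, independent identity of comparable strength is also available, typically reflecting a further symmetry of the model or the s-holomorphicity framework of \cite{chelkak-smirnov-iso}. No such supplementary identity is known for self-avoiding walk at the discrete level, and heuristic considerations suggest it genuinely emerges only in the scaling limit. Closing this gap---perhaps through a self-improvement argument that amplifies the half-relation using the a priori crossing estimates of Step~1, or by a coupling with a loop $O(n)$ model where richer algebraic structure is visible---is the real core of the conjecture and is precisely what the authors flag as open in the abstract.
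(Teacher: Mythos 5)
What you have written is a strategy outline, not a proof, and you say as much yourself: the ``decisive obstacle'' you identify in your last paragraph is left open. That is the correct assessment, because the statement is a \emph{conjecture} in the paper --- the authors do not prove it, and they explain in Section 4 exactly why the program you sketch stalls. Your diagnosis of the sticking point matches theirs: Lemma~\ref{lem:relation} gives only one complex relation per vertex, roughly $\tfrac23 E$ constraints for $E$ mid-edge values, so $F_\delta$ behaves like a divergence-free field with possibly non-trivial curl rather than a discretely holomorphic function, and there is no known analogue of the s-holomorphicity upgrade that saves the day in the Ising case \cite{chelkak-smirnov-iso}. So on the central point your proposal and the paper agree that the argument cannot currently be completed.

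That said, two of your intermediate steps are also unproven and should not be presented as if they were routine. First, the tightness step: the bounds $c/T\leq B_T^{x_c}\leq 1$ from Remark~\ref{remark} are nowhere near Russo--Seymour--Welsh-type crossing estimates for self-avoiding walk; no such estimates are known, and ``upgrading through Hammersley--Welsh-type surgery'' is not an available technique here (the conjectured true decay is $T^{-1/4}$, and even that is open). Second, the normalization $\delta^{-5/8}F$ and the claimed boundary value problem \eqref{eq:rbvp} are only formal: without full discrete holomorphicity there is no compactness or uniqueness argument to extract the limit $f$. In short, your proposal is a faithful restatement of the conjectural roadmap from \cite{LawlerSchrammWerner,smirnov-icm2010} and of the paper's own Section 4, but it does not constitute a proof, and no proof of this statement exists in the paper or elsewhere.
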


As discussed in \cite{lsw-ust,Smirnov}, to prove convergence of a random curve to SLE
it is sufficient to find a discrete observable
with a conformally covariant scaling limit.

Thus it would suffice to show that a normalized version of $F_\delta$
has a conformally invariant scaling limit,
which can be achieved by showing that it is holomorphic
and has prescribed boundary values.

As discussed in \cite{smirnov-icm2010}, the winding of an interface leading to
a boundary edge $z$ is uniquely determined, and coincides with the winding of the boundary
itself.
Thus one can say that $F_\delta$ satisfies a discrete version of the following
\emph{Riemann boundary value problem}
(a homogeneous version of the Riemann-Hilbert-Privalov BVP):
\begin{equation}
{\mathrm{Im}}\,\left({F(z)\cdot\left({\mathrm{tangent~to~}\partial\Omega}\right)^{5/8}}\right)~=~0~,~~~z\in\partial\Omega~,
\label{eq:rbvp}\end{equation}
with a singularity at $a$.
Note that the problem above has conformally covariant solutions
(as  $(dz)^{5/8}$-forms),
and so is well defined even in domains with fractal boundaries.

As noted in Remark~\ref{rem:CR}, relation \eqref{relation around vertex}
amounts to saying that discrete contour integrals of $F_\delta$ vanish.
So any (subsequential) scaling limit of $F_\delta$ would have to be holomorphic.
Unfortunately, relation \eqref{relation around vertex} alone, is unsufficient to
deduce the existence of such a limit, unlike in the Ising case \cite{chelkak-smirnov-iso}.
The reason is that for a domain with $E$ edges, \eqref{relation around vertex} imposes
$\approx \frac23 E$ relations (one per vertice) for $E$ values of $F_\delta$,
making it impossible to reconstruct $F_\delta$ from its boundary values.
So $F_\delta$ is not exactly holomorphic, it can be rather thought of as a divergence-free
vector field, which seems to have non-trivial curl.
However, we expect that in the limit the curl vanishes, which is equivalent to
$F_\delta(z)$ having the same limit regardless of the orientation of the edge $z$.

The Riemann BVP \eqref{eq:rbvp} is easily solved, and
we arrive at the following conjecture:

\begin{conjecture}Let $\Omega$ be a simply connected domain (not equal to $\mathbb C$), let 
$z\in \Omega$, and let $a$, $b$ be two distinct points on the boundary of $\Omega$. We assume that the 
boundary of $\Omega$ is smooth near $b$. For $\delta>0$, let $F_\delta$ be the holomorphic observable in the domain $(\Omega_\delta,a_\delta)$ approximating $(\Omega,a)$, and let $z_\delta$ be the closest point in $\Omega_\delta$ to $z$. Then 

\begin{equation} \lim_{\delta\rightarrow 0}~\frac{F_\delta(z_\delta)}{F_\delta(b_\delta)} ~=~\left(\frac{\phi'(z)}{\phi'(b)}\right)^{5/8}\label{convergence result}\end{equation}
where $\Phi$ is a conformal map from $\Omega$ to the upper half-plane mapping $a$ to 
$\infty$ and $b$ 
to 0. 
\end{conjecture}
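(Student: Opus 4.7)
The plan is to identify the scaling limit of $F_\delta/F_\delta(b_\delta)$ as the unique solution of the Riemann boundary value problem \eqref{eq:rbvp}, with appropriate singular behavior at $a$ and normalized to $1$ at $b$, and then solve this problem explicitly via the conformal map $\phi$. Because \eqref{eq:rbvp} is conformally covariant when $F$ is read as a $(dz)^{5/8}$-form, pulling back by $\phi$ reduces the problem to the upper half-plane, where the tangent to the real axis is constant and \eqref{eq:rbvp} becomes simply ${\rm Im}\,\tilde F \equiv 0$ on $\mathbb R$. Schwarz reflection, together with the singularity at $\infty$ inherited from $a$ and the finite non-zero normalization at $0$ inherited from $b$, then forces $\tilde F$ to be constant. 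Normalizing at $b$ gives $\tilde F \equiv 1$, which by the covariance rule for $(dz)^{5/8}$-forms yields exactly \eqref{convergence result}.

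Two analytic steps underlie this identification. First, I would establish precompactness of $\{F_\delta/F_\delta(b_\delta)\}$ on compact subsets of $\overline{\Omega}\setminus\{a\}$. A natural route is to iterate the strategy of Section~3: apply Lemma~\ref{lem:relation} on well-chosen subdomains, sum to produce global identities in the spirit of \eqref{equation box}, and thereby extract uniform bounds together with discrete equicontinuity estimates. Second, I would identify the boundary values of any subsequential limit. This is essentially a winding computation: for a walk $\g:a\to z$ ending at a boundary mid-edge $z$, the final direction of $\g$ is forced by the geometry of $\partial\Omega$, so $\W_\g(a,z)$ coincides modulo $2\pi$ with the rotation of the oriented boundary arc from $a$ to $z$. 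With $\sigma = 5/8$ this yields precisely the boundary condition \eqref{eq:rbvp} in the limit. The singularity at $a$ and the finite normalization at $b$ (the latter using smoothness of $\partial\Omega$ near $b$) are then pinned down by separate local counting estimates.

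The main obstacle, flagged in the paper itself just after \eqref{eq:rbvp}, is to show that any subsequential limit is actually holomorphic. Relation \eqref{relation around vertex} imposes only one linear constraint per vertex among the three surrounding values of $F_\delta$, so the discrete observable is better thought of as a divergence-free vector field that may carry a non-trivial discrete curl; one must show that this curl vanishes in the scaling limit, equivalently that $F_\delta(z)$ has a limit independent of the orientation of the edge $z$. I do not see how to extract such a bound from Lemma~\ref{lem:relation} alone, and I expect this to be where essentially new input is required — plausibly a second-order discrete identity obtained by combining \eqref{relation around vertex} at neighboring vertices with carefully chosen weights, or an integration-by-parts identity against a well-chosen discrete test function, controlling the curl at least on average. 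Once such a bound is in hand, uniqueness of the Riemann boundary value problem promotes subsequential convergence to full convergence and delivers \eqref{convergence result}.
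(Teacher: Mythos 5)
This statement is a \emph{conjecture} in the paper: the authors offer no proof of \eqref{convergence result}, only the heuristic discussion surrounding \eqref{eq:rbvp}, so there is no ``paper proof'' to match yours against. Your outline in fact reproduces that discussion faithfully --- the identification of the limit as the solution of the Riemann boundary value problem \eqref{eq:rbvp}, the conformal covariance of $(dz)^{5/8}$-forms reducing the problem to the half-plane, and the boundary winding computation are all exactly the route the authors sketch. To your credit, you also correctly locate the obstruction in the same place they do.

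But that obstruction is a genuine, unfilled gap, and your proposal does not close it: as the paper notes after \eqref{eq:rbvp}, relation \eqref{relation around vertex} supplies only about $\tfrac23 E$ linear constraints for $E$ values of $F_\delta$, so the observable is only ``divergence-free'' and may carry a non-trivial curl; no known argument shows the curl vanishes in the scaling limit. Your suggested remedies (second-order identities at neighbouring vertices, integration by parts against test functions) are speculative and not carried out. The same is true of your first analytic step: precompactness of $F_\delta/F_\delta(b_\delta)$ does not follow from iterating the Section~3 argument, which only yields the global identity \eqref{equation box} in the very special strip domains $S_{T,L}$ and gives no pointwise or equicontinuity control of $F_\delta$ in a general domain. (Contrast with the Ising case \cite{chelkak-smirnov-iso}, where the full set of discrete Cauchy--Riemann relations is available and both steps can be completed.) So what you have written is a correct and well-informed research plan for an open problem, not a proof; the two essential analytic inputs --- a priori compactness and the vanishing of the discrete curl --- are each missing and each appears to require genuinely new ideas.
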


The right-hand side of \eqref{convergence result} is well-defined, since the conformal map $\phi$ is 
unique up to multiplication by a real factor. Proving this conjecture would be a major step toward Conjecture~\ref{SLE} and the derivation of critical exponents.

\paragraph{Acknowledgements.} The authors would like to thank G. Slade for useful comments on the manuscript, and G. Lawler for suggesting Remark \ref{remark}. 
This research was supported by the EU Marie-Curie RTN CODY, 
the ERC AG CONFRA, as well as by the Swiss
{FNS}. 
The second author was partially supported by the Chebyshev Laboratory
(Department of Mathematics and Mechanics, St.-Petersburg State University) under
RF governement grant 11.G34.31.0026

\bibliographystyle{alpha}

\begin{flushright}
\footnotesize\obeylines
  \textsc{D\'epartement de Math\'ematiques}
  \textsc{Universit\'e de Gen\`eve}
  \textsc{Gen\`eve, Switzerland}
  \textsc{E-mail:} \texttt{hugo.duminil@unige.ch ; stanislav.smirnov@unige.ch}
\end{flushright}

\end{document}